\documentclass[preprint]{elsarticle}
\usepackage{graphicx} 
\usepackage{quiver}
\usepackage{amsmath}
\usepackage{amsthm}
\usepackage{tikz}
\usepackage{xcolor}
\usetikzlibrary{calc,decorations.pathmorphing,shapes}
\usepackage{wrapfig}

\newtheorem{theorem}{Theorem}
\newtheorem{lemma}[theorem]{Lemma}  
\newtheorem{definition}{Definition}
\newtheorem{corollary}[theorem]{Corollary}

\newcounter{sarrow}
\newcommand\xrsquigarrow[1]{%
\stepcounter{sarrow}
\mathrel{\begin{tikzpicture}[baseline= {( $ (current bounding box.south) + (0,0.0ex) $ )}]
\node[inner sep=.5ex] (\thesarrow) {$\scriptstyle #1$};
\path[draw,<-,decorate,
  decoration={zigzag,amplitude=0.7pt,segment length=1.2mm,pre=lineto,pre length=4pt}] 
    (\thesarrow.south east) -- (\thesarrow.south west);
\end{tikzpicture}}%
}

\newcommand{\mynote}[3]{
    \fbox{\bfseries\sffamily\scriptsize#1}
{\small$\blacktriangleright$\textsf{\emph{\color{#3}{#2}}}$\blacktriangleleft$}}

\newcommand{\ignore}[1]{}

\newif\ifshownotes
\shownotestrue   

\ifshownotes
\newcommand{\ps}[1]{\mynote{Pierre}{#1}{red}}
\newcommand{\pk}[1]{\mynote{Petr}{#1}{cyan}}
\newcommand{\gt}[1]{\mynote{Guillermo}{#1}{blue}}
\else
\newcommand{\ps}[1]{}
\newcommand{\pk}[1]{}
\newcommand{\gt}[1]{}
\fi

\title{Pending Conflicts Make Progress Impossible}

\author[tp]{Petr Kuznetsov}
\ead{petr.kuznetsov@telecom-paris.fr}

\author[tsp]{Pierre Sutra}
\ead{pierre.sutra@telecom-sudparis.eu}

\author[tp,tsp]{Guillermo Toyos-Marfurt \corref{cor1}}
\ead{guillermo.toyos@telecom-paris.fr}

\address[tp]{LTCI, T\'el\'ecom Paris, Institut Polytechnique de Paris, 91120 Palaiseau, France}
\address[tsp]{SAMOVAR, T\'el\'ecom SudParis, Institut Polytechnique de Paris, 91120 Palaiseau, France}

\date{October 2025}

\cortext[cor1]{Corresponding author}

\begin{document}
\begin{abstract}
In this work, we study progress conditions for commutativity-aware, linearizable implementations of shared objects.
Motivated by the observation that commuting operations can be executed in parallel, we introduce \emph{conflict-obstruction-freedom}: a process is guaranteed to complete its operation if it runs for long enough without encountering step contention with conflicting (non-commuting) operations.
This condition generalizes obstruction-freedom and wait-freedom by allowing progress as long as step contention is only induced by commuting operations.
We prove that conflict-obstruction-free universal constructions are impossible to implement in the asynchronous read-write shared memory model.
This result exposes a fundamental limitation of conflict-aware universal constructions: the mere invocation of conflicting operations imposes a synchronization cost.
Progress requires eventual resolution of pending conflicts.
\end{abstract}

\begin{keyword}
Distributed computing \sep Shared memory \sep Commutativity
\end{keyword}

\maketitle

\section{Introduction}

\paragraph{Context}
Synchronizing operations that access shared data is a crucial task in distributed systems.
The conventional safety property, linearizability~\cite{HW90-lin}, requires that each operation takes effect instantaneously, at some point in time between its invocation and response.
Jointly with safety, one also needs to define a progress property specifying under which circumstances an operation should return. 
The strongest of such guarantees is wait-freedom~\cite{Her91}, which demands that every operation completes in a finite number of steps of its invoking process.

We consider the standard asynchronous shared-memory model, where processes communicate via read-write primitives and may crash at any time.
%
In this model, it is well-established that wait-free linearizable implementations are impossible~\cite{flp}.
Because linearizability cannot usually be forsaken, without a dire price in usability, a lot of research efforts have focused on exploring weaker progress guarantees. 
A prominent example is obstruction-freedom~\cite{of}:
a process that runs \emph{alone} (without encountering steps of concurrent processes) for long enough is guaranteed to complete its operation.
Although obstruction-free algorithms guarantee progress only without contention, they often behave as wait-free under typical scheduling assumptions and admit simpler implementations~\cite{of-wf}.
In particular, one can build an obstruction-free \emph{universal construction}, which is a universal recipe to create linearizable objects, using just read-write communication primitives~\cite{of,ofc}.
%

\paragraph{Motivation}
To ensure they apply to a wide variety of shared objects, progress properties are typically agnostic to their semantics. 
%
However, a key observation is that the main cost in synchronizing concurrent accesses comes from computing a total order among \emph{conflicting} (non-commuting) operations~\cite{HW90-lin}.
In contrast, commuting operations~\cite{wheil}, such as increments on a shared counter, peek operations on a list, or updates to distinct keys in a hash table, may proceed independently, in a coordination-free manner.
Some works leverage this to reduce the cost of data replication~\cite{lamport2005generalized}, boost transactional memory~\cite{boosting}, and define smart locks~\cite{FEKETE199065}.
In light of these results, a natural question is whether commuting operations may enjoy stronger progress than regular, conflicting ones.

\paragraph{Contributions}
%
%
We introduce a novel progress condition, \emph{conflict-obstruction-freedom} (COF).
COF guarantees that a process that encounters no concurrent steps of conflicting operations must be able to complete its operation.
This progress condition generalizes obstruction-freedom in a natural sense.
Moreover, for objects that only provide commuting operations, COF is identical to wait-freedom.  
%
%
For example, consider a collection of \emph{queues}, where each of them is implemented in the obstruction-free way~\cite{of}.
The resulting composition of queues is conflict-obstruction-free, as any two operations are in conflict if and only if they are applied to the same queue. 

Further, we investigate whether conflict-obstruction-freedom admits a universal construction in the read-write shared memory model.
%
%
%
This model captures minimal communication assumptions in a concurrent system, making it a natural starting point for studying a new progress condition.
%
%
%
%
When all operations conflict, one can build obstruction-free universal constructions~\cite{of}.
On the other hand, when all operations commute, wait-freedom is trivially achievable~\cite{pram}.
%
%
Thus, our study aims to characterize whether such a construction is possible in the intermediate cases, that is, when some operations commute and some do not. 

Our main result is negative.
We show that such a universal construction can be used to solve a weaker version of consensus in which processes must terminate when only processes with the same input remain active.
We prove that this task is impossible to solve using only reads and writes.

This impossibility exposes a fundamental limitation in leveraging operation commutativity in concurrent systems.
The mere presence of conflicting operations incurs a synchronization cost, even if these operations later become inactive.
While commutativity eases coordination, it does not eliminate the need for synchronization: once a conflict appears, the system must eventually resolve it to make progress. 
Thus, to date, obstruction-freedom remains the strongest known progress condition that enables a universal construction in the read-write model.

%
%
%

\section{Preliminaries}
\label{sec:preliminaries}

\paragraph{Objects}
A sequential object is defined as a tuple $(Q,q_0,O,R,\sigma)$, where $Q$ is a set of \emph{states}, $q_0\in Q$ is the \emph{initial state}, $O$ is a set of \emph{operations}, $R$ is a set of \emph{responses}, and $\sigma:\;O\times Q \rightarrow R\times Q$ is a \emph{state transition function} specifying the \emph{response} and \textit{next state} for each operation and state.
Given a state $q$ and a sequence of operations $s\in O^*$,  we denote by $s(q)$ the final state after applying the operations in $s$ sequentially starting from $q$.
%
%
For two sequences $s$ and $s'$, $s \cdot s'$ denotes their concatenation.
Given two operations $o, o' \in O$, we say that they \emph{commute} in a state $q$ if $o \cdot o'(q) = o' \cdot o(q)$ and each operation returns the same result in both sequences.
The object defines a symmetric \emph{conflict relation}, written $\asymp~\subseteq O \times O$, where $o \asymp o'$ if there exists a state $q$ in which $o$ and $o'$ do not commute.
When this happens, we say that operations $o$ and $o^\prime$ \emph{conflict}.

\paragraph{System Model}
We consider a system with $n$ asynchronous processes, $\{p_i\}_{i \in [n]}$, that communicate through shared memory using atomic read-write registers.
Processes are deterministic, and may fail by crashing, i.e., stopping prematurely and taking no further steps; an arbitrary number of processes may crash.
There is no bound on relative process speeds, and no assumptions are made about timing or scheduling fairness.

\paragraph{Algorithms and executions}
To \emph{implement} a shared object, processes follow an \emph{algorithm} consisting of deterministic automata, one per process.
When a process $p$ \emph{invokes} an operation on the object, it takes \emph{steps}, according to its automaton.
Each step consists of local computation and one shared memory event -- either a read or a write to an atomic register.
After each step, $p$ updates its local state according to its state machine and optionally returns a response on the pending high-level operation. 
In an \emph{operation instance} $\Phi =(x,o,p)$, process $p$ performs the operation $o$ on object $x$.
An \emph{execution fragment} is a (possibly infinite) sequence of steps by the processes, according to their respective state machines.
A \emph{configuration} represents a complete system state, that is, the local states of all processes together with the state of the shared memory.
%
%
An \emph{execution} is an execution fragment starting from the algorithm's initial configuration.
If the last step of an operation instance $\Phi$ in execution $\alpha$ returns a response, we say that $\Phi$ \emph{completes} in $\alpha$, otherwise $\Phi$ is \emph{pending} in $\alpha$.
We say that $\Phi$ takes a step in an execution $ \alpha $ when the process $p$ that invoked $\Phi$ takes a step and $\Phi$ has not yet returned a response in $\alpha$.
In an infinite execution, a process is \emph{correct} if it takes an infinite number of steps or it has no pending operation; otherwise, it is \emph{faulty}. 
%

\paragraph{Progress conditions}
An implementation is \emph{wait-free} if, in every execution, each 
operation instance invoked by a correct process completes in a finite number of its own steps.
%
%
%
An operation instance $\Phi$ is \emph{eventually step-contention free} in an execution $\alpha$ if either $\Phi$ completes in $\alpha$, or there exists a suffix of $\alpha$ in which only $\Phi$ takes steps (i.e., from some point on in $\alpha$, $\Phi$ runs solo).
%
An implementation is obstruction-free (OF) if, for every infinite execution $\alpha$, any operation instance $\Phi$ invoked by a correct process completes in $\alpha$ whenever $\Phi$ is eventually step-contention-free in $\alpha$~\cite{ofc}.

\section{Conflict-obstruction-freedom} \label{sec:cof}

%
%
%
%
%
We now introduce \emph{conflict-obstruction-freedom} (COF), a natural generalization of obstruction-freedom that leverages operation commutativity.
Intuitively, while obstruction-freedom guarantees termination in the absence of \emph{step contention}, COF only requires the absence of step-contention with \emph{conflicting} operations.
Thus, any operation that takes sufficiently many steps completes, provided it eventually interleaves only with steps of operations that commute with it.
%
Formally, we generalize the notion of \emph{obstruction-freedom} from~\cite{ofc} by restricting step-contention to operations that do not commute. 
%

    %
    %
    %
    %

\begin{figure}
    \centering
\begin{tikzpicture}[>=stealth, thick]
  \def\yone{0}
  \def\ytwo{-0.8}
  \def\ythree{-1.6}
  \def\xstart{0}
  \def\xend{10}

  \draw[->] (\xstart,\yone) -- (\xend,\yone);
  \draw[->] (\xstart,\ytwo) -- (\xend,\ytwo);
  \draw[->] (\xstart,\ythree) -- (\xend,\ythree);

  \node[left] at (\xstart,\yone) {$p_1$};
  \node[left] at (\xstart,\ytwo) {$p_2$};
  \node[left] at (\xstart,\ythree) {$p_3$};

  \node[anchor=south west] at (0.3,\yone) {\textit{read}$()$};
  \node[anchor=south west] at (3,\ytwo) {\textit{inc}$()$};
  \node[anchor=south west] at (1.5,\ythree) {\textit{inc}$()$};

  \newcommand{\leftbracket}[4]{
    \pgfmathsetmacro{\x}{#1}
    \pgfmathsetmacro{\y}{#2}
    \pgfmathsetmacro{\h}{#3}
    \pgfmathsetmacro{\t}{#4}
    \draw[line width=1.2pt] (\x+\t,\y+\h/2) -- (\x,\y+\h/2) --  (\x,\y-\h/2) -- (\x+\t,\y-\h/2);
  }

  \leftbracket{1.5}{\yone}{0.7}{0.15}
  \leftbracket{4.0}{\ytwo}{0.7}{0.15}
  \leftbracket{2.5}{\ythree}{0.7}{0.15}

  \foreach \x in {1.7, 2, 2.3, 2.6, 2.9, 3.2, 3.5, 3.8, 4.1, 4.4, 4.7, 5.0, 5.3} {
    \fill (\x,\yone) circle (1.8pt);
  }
  \foreach \x in {4.2,4.8,5.4,6.0,6.6,7.2,7.8,8.4, 9.0, 9.6} {
    \fill (\x,\ytwo) circle (1.8pt);
  }
  \foreach \x in {2.7, 3.3, 3.9, 4.5, 5.1, 5.7, 6.3, 6.9, 7.5, 8.1, 8.7, 9.3} {
    \fill (\x,\ythree) circle (1.8pt);
  }

  \draw[red, line width=2] (5.6,0.15) -- (5.8,-0.15);
  \draw[red, line width=2] (5.6,-0.15) -- (5.8,0.15);
  
\end{tikzpicture}
\caption{
Execution of three operations on a shared counter.
Open brackets mark invocations and dots mark steps.
Process $p_1$ crashes after some steps, while $p_2$ and $p_3$ continue concurrently.
A read conflicts with an increment.
Two increments commute.
Conflict-obstruction-freedom guarantees completion of the two increments, while obstruction-freedom does not.
%
%
%
%
}
\label{fig:cof_example}
\vspace{-12pt}
\end{figure}

\begin{definition}
An operation instance $\Phi\!=\!(x,o,p)$ is eventually conflict-step-contention free in an execution $\alpha$ if either $\Phi$ completes in $\alpha$, or there exists a suffix of $\alpha$ in which no other operation $\Phi'=(x,o',q)$ takes steps such that $o\!\asymp\!o'$.

\end{definition}

\begin{definition}
An implementation is conflict-obstruction-free if, for every infinite execution $\alpha$ and every operation instance $\Phi$ invoked by a correct process, such that $\Phi$ is eventually conflict-step-contention free in $\alpha$, $\Phi$ completes in $\alpha$.

\end{definition}

To grasp the new progress condition, consider the execution in Figure~\ref{fig:cof_example}, where three processes access a shared counter.
%
%
The object supports two types of operations: $read()$, which returns the current value of the counter, and
$inc()$, which increments the counter by one.
%
%
In the figure, process $p_1$ invokes $o_1=read()$, while processes $p_2$ and $p_3$ invoke $o_2=inc()$ and $o_3=inc()$, respectively.
Reading a counter conflicts with incrementing it ($o_1\!\asymp\! o_2$ and $o_1\!\asymp\!o_3$), but two increment operations --- and likewise two read operations --- commute with each other ($o_2 \not\asymp o_3$).
After some steps of $p_1$, $p_1$ crashes 
while $p_2$ and $p_3$ continue concurrently.
In Figure~\ref{fig:cof_example}, no operation is eventually step-contention free.
As a consequence, OF would not guarantee that the operations of the two correct processes complete.
On the contrary, COF would ensure progress because $o_2$ and $o_3$ are in fact eventually conflict-step-contention free.

\ignore{
Another simple example is provided by implementations of collections of objects.
Consider a collection of objects such that operations on the same object conflict, whereas operations on distinct objects commute.
Using an obstruction-free implementation for each object yields a conflict-obstruction-free implementation of the collection.
%
}

%
\begin{lemma}
    Let $\mathcal{I}$ be an implementation of a shared object $(Q,q_0,O,R,\sigma)$.
    If the induced conflict relation satisfies $\asymp = O\times O$ (i.e., every pair of operations conflicts), then $\mathcal{I}$ is conflict-obstruction-free iff $\mathcal{I}$ is obstruction-free.
\end{lemma}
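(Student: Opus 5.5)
The plan is to show that, under $\asymp = O\times O$, the two hypotheses coincide: an operation instance is eventually conflict-step-contention free in $\alpha$ iff it is eventually step-contention free in $\alpha$. Both progress conditions quantify over the same infinite executions and the same instances invoked by correct processes, and have the same conclusion ($\Phi$ completes). So once the hypotheses are shown equivalent, the two conditions are literally the same statement and the lemma follows.

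First, eventual step-contention freedom implies eventual conflict-step-contention freedom, for any conflict relation. If $\Phi$ completes, both hold. Otherwise there is a suffix in which only $\Phi$ takes steps. In particular, no other instance $\Phi'=(x,o',q)$ takes steps there, whether or not it conflicts with $\Phi$.

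Conversely, assume $\Phi=(x,o,p)$ is pending and has a suffix in which no other instance $\Phi'=(x,o',q)$ with $o\asymp o'$ takes steps. Because $\asymp = O\times O$, every $o'$ satisfies $o\asymp o'$, so no other operation instance on $x$ takes steps in that suffix. The main obstacle is turning this into ``only $\Phi$ takes steps,'' since a process could take steps while having no pending operation. I would address this using the convention of the model: a step of a process is attributed to its pending operation instance, and steps occur only while executing an operation ($\Phi$ ``takes a step'' when its process does and it has not returned). Since we consider the single implemented object $x$, every step is a step of some instance on $x$. Hence in the suffix the only instance taking steps is $\Phi$, i.e. $\Phi$ runs solo, and $\Phi$ is eventually step-contention free.

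Combining the two directions gives the equivalence of the hypotheses. Then for every infinite $\alpha$ and every $\Phi$ invoked by a correct process, the implication ``hypothesis $\Rightarrow \Phi$ completes'' reads identically for COF and OF. Therefore $\mathcal{I}$ is conflict-obstruction-free iff it is obstruction-free.
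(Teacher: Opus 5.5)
Your proposal is correct and follows essentially the same route as the paper: under $\asymp = O\times O$, eventual conflict-step-contention freedom coincides with eventual step-contention freedom, so the COF and OF requirements become the same statement. Your extra remarks (that steps happen only inside pending operation instances and that all instances are on the single implemented object $x$) make explicit a point the paper's proof leaves implicit.
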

\begin{proof}
By definition, an operation instance $\Phi$ is eventually conflict-step-contention free if either (1) $\Phi$ completes, or (2) there is a suffix of the execution in which no other
operation conflicting with $\Phi$ takes steps.
If $\asymp=\!O\!\times\!O$, any operation conflicts with $\Phi$, so condition (2) holds iff no other operation takes steps in the suffix.
Hence, in this case, eventual conflict-step-contention-freedom coincides with eventual step-contention-freedom.
Since COF requires termination of every eventually conflict-step-contention free operation, and OF requires termination of every eventually step-contention free operation, the conditions coincide.
%
\end{proof}

\begin{lemma}
    Let $\mathcal{I}$ be an implementation of a shared object $(Q,q_0,O,R,\sigma)$.
    If the induced conflict relationship is empty, $\asymp = \emptyset$ (i.e., all operations commute), then $\mathcal{I}$ is conflict-obstruction-free iff $\mathcal{I}$ is wait-free.
\end{lemma}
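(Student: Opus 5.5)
We prove the two directions separately, mirroring the proof of the previous lemma. The key observation is that when $\asymp=\emptyset$, no operation instance $\Phi'=(x,o',q)$ can satisfy $o\asymp o'$. So the suffix condition in the definition of eventual conflict-step-contention freedom holds vacuously: \emph{every} operation instance is eventually conflict-step-contention free in every execution.

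\emph{(Wait-free $\Rightarrow$ COF.)} Fix an infinite execution $\alpha$ and an operation instance $\Phi$ invoked by a correct process $p$ that is eventually conflict-step-contention free. We need $\Phi$ to complete. Since $p$ is correct, either it has no pending operation, in which case $\Phi$ has already completed, or it takes infinitely many steps. In the latter case wait-freedom says $\Phi$ completes within a finite number of $p$'s own steps, so $\Phi$ completes in $\alpha$.

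\emph{(COF $\Rightarrow$ wait-free.)} Take any execution and an operation instance $\Phi$ invoked by a correct process $p$. First, if the execution is infinite, then $\Phi$ is eventually conflict-step-contention free by the observation above, so COF gives that $\Phi$ completes.

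The main obstacle is that wait-freedom as defined asks for completion ``in a finite number of its own steps'', while COF only speaks about infinite executions. I would handle this by reading wait-freedom, as is standard and consistent with the definition of correct processes, as: in every infinite execution, every operation of a correct process completes. Under that reading, the infinite case above is already the whole argument.

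If one insists on the stronger reading, I would argue by contradiction. Suppose there are finite executions in which $p$ takes arbitrarily many steps of $\Phi$ without completing. I would then splice these into a single infinite execution in which $p$ takes infinitely many steps and $\Phi$ never completes, for instance by letting only $p$ run from some configuration, or by a König-style argument on the finitely-branching tree of extensions. That execution contradicts the infinite case.
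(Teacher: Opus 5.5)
Your main argument is correct and is essentially the paper's proof: with $\asymp=\emptyset$ the suffix condition is vacuous, so every operation instance is eventually conflict-step-contention free, and both directions then follow directly from the definitions under the per-infinite-execution reading of wait-freedom, which is the reading the paper uses implicitly. You should drop the fallback for the ``stronger reading'', though: K\"onig's lemma on the tree of non-completing prefixes only yields an infinite path on which $p$ may take finitely many steps (so $p$ is faulty and COF is silent). In fact a uniform step bound does not follow from COF at all: for example, $p$ first reads a register that another process may have incremented arbitrarily often, then takes that many further steps.
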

\begin{proof}
If $\asymp=\emptyset$, no two operations conflict. 
Hence, in every execution, every operation instance is eventually conflict-step-contention free.
By definition of COF, every such operation invoked by a correct process must complete, which implies wait-freedom.
Conversely, any wait-free implementation guarantees termination of every operation in every execution, satisfying COF.
%
%
%
%
%
\end{proof}

\section{Impossibility of Conflict-obstruction-freedom} \label{sec:impossible}

Our main result establishes that no algorithm can implement a conflict-obstruction-free universal construction in the read-write shared memory model.
The proof proceeds by a reduction to the (binary) consensus problem requiring COF termination.
Then, we show that this problem cannot be implemented in the read-write shared memory model.
%
%
%

%
%



\subsection{Conflict-obstruction-free Consensus} \label{sec:quasi_cons}

In the consensus problem~\cite{flp},  processes \emph{propose} values via invoking single-shot operations \textit{propose($v$)} and \emph{decide} on the returned values.
A consensus algorithm must satisfy the following properties:
(1) Validity: If a process decides value $v$, then some process proposed $v$; (2) Agreement: No two processes decide differently;  (3) Termination: Every correct process decides in finitely many steps.
%
%
Conflict-obstruction-free consensus replaces wait-free termination with a weaker progress requirement, COF termination, while preserving validity and agreement. 

We now model consensus as a deterministic sequential object, restricting attention to the binary case. 
Its state space is $Q\!=\!\{0,1,\bot\}$ and initial state $q_0\!=\!\bot$, where $0$ and $1$ represent the states where values $0$ and $1$ have been decided, respectively, and $\bot$ represents the initial undecided state.
The set of operations is $O\!=\!\{\textit{propose}(v)\}_{v\in \{0,1\}}$, each returning a value in $\{0,1\}$.
The transition function $\sigma$ sets the state to the argument of \emph{the first} \textit{propose} operation and returns that value; subsequent operations return the decision value associated with the state without modifying it.
Hence, any pair of \textit{propose} operations \emph{commute} in states $0$ and $1$, and two operations $\textit{propose}(v)$ and $\textit{propose}(v')$ commute in state $\bot$ iff $v\!=\!v'$.
The conflict relation $\asymp$ therefore consists precisely of the pair $(\textit{propose}(0), \textit{propose}(1))$: only concurrent proposals of different values conflict.
Thus, a COF consensus implementation guarantees termination in executions where, after some point, only processes with the same input value take steps.

Conflict-obstruction-free consensus is implementable in the read–write model for two processes. 
The consensus algorithm in~\cite{ofc} satisfies COF termination in this case, as it guarantees termination when both processes have the same input.
As we show below, this is no longer the case for more than two processes.

\subsection{Proof of the Impossibility Result}

We show that COF consensus cannot be implemented in the read-write shared memory model for three or more processes.
To establish this result, we consider a minimal model instance with three processes $\{p_0, p_1, p_2\}$, where $p_0$ proposes $0$ and both $p_1$ and $p_2$ propose $1$.
The proof adapts the classical FLP-style bivalency argument~\cite{flp,LA87}, using an appropriate definition of valency that accounts for COF termination.





%
%

%
The next step of each process is fully determined by its local state.
%
%
We denote a step $e = (p, m, a)$, where $p$ is the process performing the step, $m$ is the memory location accessed, and $a \in \{\text{R}, \text{W}\}$ indicates whether $p$ is reading or writing register $m$.
%
If $e$ is the next step of $p$ in some configuration $C$, we say that $e$ is \emph{enabled in $C$}.
%

An \emph{initial configuration}, specifying the state of each process and the shared memory, is characterized by the vector assigning inputs to each process.
%
In our consensus instance, we consider the initial input vector to be $I\!=\!(0,1,1)$.
%
We write $C\!\xrightarrow{e}\!C'$ or $C \xrightarrow{} C'$ if there exists an enabled step $e\!=\!(p,m,a)$ in $C$ such that process $p$ takes step $e$, as a result, the configuration of the system changes from $C$ to $C'$. 
Note that the only difference between $C$ and $C'$ is the local state of process $p$, its next enabled step, and possibly the state of the shared register $m$. 
%
We write $C\! \xrsquigarrow{\ }\!C'$ or $C \!\xrsquigarrow{\pi}\!C'$ if $C\!=\!C'$ or there exists a sequence of steps $\pi\!=\!e_1, \ldots, e_{k \geq 1}$ such that $C\!\xrightarrow{e_1} \!\dots\! \xrightarrow{e_{k}}\!C'$ and say that $C'$ is \textit{reachable} from $C$, or $C'$ is a \emph{descendant} of $C$.
For a set $P \subseteq \{p_i\}_{i \in[n]}$, we say that an execution fragment $\pi$ is \emph{$P$-only} if $\pi$ only contains steps of processes in $P$.
%

We define the valency of a configuration according to the \emph{decisions} processes $p_1$ and $p_2$ take in a subset of reachable configurations.
A configuration $C$ is $v$-\emph{deciding} if some process has decided $v$ in $C$.
We say that a configuration $C$ is \emph{$v$-valent} if 
there is no $\{p_1,p_2\}$-only sequence $\pi$, such that $C \xrsquigarrow{\pi} C'$ and $C'$ is $(1-v)$-deciding.
We say that a configuration is \emph{univalent} if it is $0$-valent or $1$-valent. 
%
Otherwise, we say that the configuration is \emph{bivalent}.  
Clearly, if a configuration $C$ is $v$-valent and it has a descendant $C'$, $C \xrsquigarrow{\pi} C'$, where $\pi$ is $\{p_1,p_2\}$-only, then $C'$ is also $v$-valent.

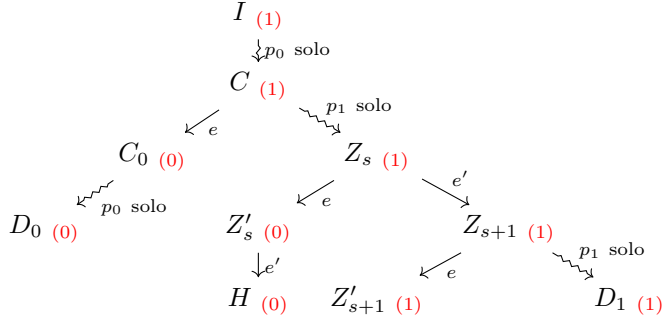
\begin{figure}
\vspace{-3pt}
\centering
\[\begin{tikzcd}[column sep=tiny,row sep=small]
	&& {I  \ _{\color{red}(1)}} \\
	&& {C  \ _{\color{red}(1)}} \\
	& {C_0  \ _{\color{red}(0)}} && {Z_s  \ _{\color{red}(1)}} \\
	{D_0\ _{\color{red}(0)}} && {Z'_s  \ _{\color{red}(0)}} && {Z_{s+1} \ _{\color{red}(1)}} \\
	&& H\ _{\color{red}(0)} & {Z_{s+1}'  \ _{\color{red}(1)}} && {D_1\ _{\color{red}(1)}}
	\arrow["{p_0 \ \text{solo}}", squiggly, from=1-3, to=2-3]
	\arrow["e", from=2-3, to=3-2]
	\arrow["{p_1 \ \text{solo}}", squiggly, from=2-3, to=3-4]
	\arrow["{p_0 \ \text{solo}}", squiggly, from=3-2, to=4-1]
	\arrow["e", from=3-4, to=4-3]
	\arrow["{e'}", from=3-4, to=4-5]
	\arrow["{e'}", from=4-3, to=5-3]
	\arrow["e", from=4-5, to=5-4]
	\arrow["{p_1 \ \text{solo}}", squiggly, from=4-5, to=5-6]
\end{tikzcd}\]

\vspace{-18pt}
\caption{Valency diagram showcasing the contradiction in Lemma~\ref{lemma:init_tobivalent}.
}
\label{fig:flp_diag}
\vspace{-12pt}
\end{figure}

\begin{lemma} \label{lemma:init_config_1valent}
    The initial configuration $I$ with input vector $(0,1,1)$ is $1$-valent.
\end{lemma}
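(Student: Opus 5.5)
We need to show that no $\{p_1,p_2\}$-only sequence $\pi$ from $I$ leads to a $0$-deciding configuration. The natural route is an indistinguishability argument combined with validity. Suppose, for contradiction, that $I \xrsquigarrow{\pi} C'$ with $\pi$ a $\{p_1,p_2\}$-only sequence and $C'$ $0$-deciding. In $\pi$, process $p_0$ takes no steps. It therefore never writes to shared memory, and its local state in $C'$ is still its initial state. Consequently, $p_0$ has not decided in $C'$; indeed, a process decides only when a step returns a response, and $p_0$ took none. So some process $p_j$ with $j\in\{1,2\}$ has decided $0$ in $C'$.

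Next, compare with the initial configuration $I'$ of input vector $(1,1,1)$, or equally any vector in which $p_0$'s input is $1$. The only difference between $I$ and $I'$ is the local state of $p_0$; the shared memory and the local states of $p_1,p_2$ are identical. Because processes are deterministic and $p_0$ never acts in $\pi$, an induction on the length of $\pi$ shows that $\pi$ is also applicable from $I'$. The induction also shows that it yields a configuration $C''$ which agrees with $C'$ on the shared memory and on the local states of $p_1$ and $p_2$. The inductive step rests on two facts. First, the next enabled step of $p_1$ or $p_2$ depends only on its own local state. Second, the value returned by a read depends only on the memory. Hence $p_j$ has decided $0$ in $C''$ as well.

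This contradicts validity: in the execution from $I'$, no process proposed $0$. The execution prefix from $I'$ to $C''$ is a legitimate execution of the algorithm, and it can be extended (for instance, with $p_0$ crashing) to a full execution in which the decision $0$ is retained. Validity holds in every execution, so it is violated here. Hence no such $\pi$ exists, and $I$ is $1$-valent.

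The only delicate point is the indistinguishability step. The essential observation is that the inputs of $p_1$ and $p_2$ are both $1$, while $p_0$, the sole proposer of $0$, is excluded from $\pi$, so the decision of $p_j$ cannot depend on $p_0$'s input. Everything else follows from the definitions of $v$-deciding and $v$-valent. Termination, including COF termination, is not needed for this lemma.
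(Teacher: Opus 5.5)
Your proof is correct and follows the paper's argument: a $\{p_1,p_2\}$-only sequence from $I$ is indistinguishable to $p_1$ and $p_2$ from the same sequence started at input vector $(1,1,1)$, so Validity rules out a decision of $0$. The paper adds a COF Termination step to show that $1$ is actually decided, but as you observe, the definition of $1$-valence does not require it.
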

\begin{proof}
    Consider any execution $\pi$ from $I$ in which only $p_1$ and $p_2$ take steps.
    This execution is indistinguishable to $p_1$ and $p_2$ from an execution starting from the input vector $(1,1,1)$;  therefore, by Validity, no process decides $0$ in $\pi$.
    By COF Termination, $1$ is decided in execution $\pi$.
\end{proof}
    


\begin{lemma} \label{lemma:init_tobivalent}
    There exists a bivalent configuration reachable from $I$. 
\end{lemma}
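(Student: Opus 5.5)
We follow the path in Figure~\ref{fig:flp_diag}. Suppose for contradiction that every configuration reachable from $I$ is univalent. Consider the infinite execution in which only $p_0$ takes steps from $I$. Eventually $p_0$ runs solo, so it is eventually conflict-step-contention free, and COF termination forces it to decide. By Validity (the run is indistinguishable to $p_0$ from one with inputs $(0,0,0)$) it decides $0$. So the $p_0$-solo run from $I$ reaches a $0$-deciding configuration. A $0$-deciding configuration cannot be $1$-valent, since any $\{p_1,p_2\}$-only extension still contains $p_0$'s decision of $0$; hence it is $0$-valent. By Lemma~\ref{lemma:init_config_1valent}, $I$ is $1$-valent. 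Along the $p_0$-solo path there is therefore a first step $e$ of $p_0$ and a configuration $C$ such that $C$ is $1$-valent and $C \xrightarrow{e} C_0$ with $C_0$ $0$-valent. Here $C_0 \xrsquigarrow{p_0\text{ solo}} D_0$ and $D_0$ is $0$-deciding.

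Next run $p_1$ solo from $C$. This is $\{p_1,p_2\}$-only, so every configuration on the way is $1$-valent. By COF termination ($p_1$ eventually runs alone, and $p_2$ has the same input anyway), $p_1$ decides $1$ at some $D_1$. We interleave $e$ along this path. Let $Z_0 = C, Z_1, \ldots$ be the configurations of $p_1$'s solo run. Applying $e$ at $Z_0$ gives a $0$-valent configuration. Applying $e$ at the last configuration, where $p_1$ has decided $1$, gives a configuration containing a decision of $1$, which is therefore $1$-valent. Since all configurations are univalent, there is an index $s$ with $Z_s \xrightarrow{e} Z'_s$ $0$-valent and $Z_s \xrightarrow{e'} Z_{s+1} \xrightarrow{e} Z'_{s+1}$ $1$-valent, where $e'$ is $p_1$'s step. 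Note that $p_0$'s next step remains $e$ throughout, since only $p_1$ moves.

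Now do the usual read/write case analysis on $e$ (by $p_0$) and $e'$ (by $p_1$).

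\begin{itemize}
\item \emph{Different registers, or both reads:} the steps commute. Then $Z'_s \xrightarrow{e'} H = Z'_{s+1}$, and $H$ is a $\{p_1,p_2\}$-only descendant of the $0$-valent $Z'_s$, so $H$ is $0$-valent. This contradicts $Z'_{s+1}$ being $1$-valent.
\item \emph{$e$ is a read:} $Z'_s$ and $Z_{s+1}$ differ only in $p_0$'s local state. Since $Z_{s+1}$ is reachable from $C$ by a $p_1$-only path, $Z'_{s+1}$ is reached by applying $e$ and then $e'$. Instead, run $p_1$ and $p_2$ from $Z'_s$ with $e'$ first, and compare with running them from $Z_{s+1}$ after $e$. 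More directly: any $\{p_1,p_2\}$-only run from $Z'_s$ is indistinguishable to $p_1,p_2$ from one from $Z_s$, because $e$ changed only $p_0$. So run $p_1$ solo from $Z'_s$; it must decide (COF), and its decision equals the one from $Z_s$'s run, which is $1$. This contradicts $0$-valency of $Z'_s$.
\item \emph{$e'$ is a read:} $Z'_{s+1}$ and $Z'_s$ differ only in $p_1$'s state. Run $p_2$ solo from both. $p_2$ has input $1$, runs alone, and must decide by COF. It cannot tell the runs apart, so it decides the same value in both, contradicting the opposite valencies.
\item \emph{Both write the same register:} $e'$ overwrites $e$, so $Z'_s \xrightarrow{e'}$ and $Z'_{s+1}$ differ only in $p_1$'s state. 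Again run $p_2$ solo from each for a contradiction.
\end{itemize}

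The main obstacle is exactly this need for the third process: valency only allows $\{p_1,p_2\}$-only extensions, so $p_0$ cannot be used to distinguish configurations. The key point is that $p_2$, which shares input $1$ with $p_1$, can always serve as the witness whose solo run must terminate under COF. The case where $e$ is a read needs care; I would uniformly handle it by a $p_2$-solo run, which cannot see $p_0$'s local state. That argument applies to $Z'_s$ versus $Z_s$ via $1$-valency of $Z_s$, with no need for $p_1$. With that, all cases are contradictions, so some reachable configuration is bivalent.
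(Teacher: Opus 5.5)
Your proof follows the paper's argument step for step: the critical step $e$ of $p_0$ on the $p_0$-solo path from the $1$-valent $I$, interleaving $e$ along $p_1$'s solo run from $C$ to locate $s$, and a read/write case analysis with $p_2$ as the witness. Cases (a) and (c) are correct. Your indistinguishability argument with inputs $(0,0,0)$ for $p_0$'s decision is more careful than an appeal to Validity alone.

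The step that fails as written is case (d). The configuration you compare with $Z'_{s+1}$ is $H$, where $Z'_s\xrightarrow{e'}H$. But $H$ and $Z'_{s+1}$ have identical local states for all three processes, since each of $p_0,p_1$ has taken exactly its one write. They differ precisely in the content of $m$. In $H$ the last write to $m$ is $e'$. In $Z'_{s+1}$, reached from $Z_s$ by $e'$ and then $e$, the last write is $e$. So $p_2$ can read $m$ and tell them apart, and nothing prevents $H$ from being $0$-valent while $Z'_{s+1}$ is $1$-valent. ``$e'$ overwrites $e$'' is true of $H$, which is exactly why $H$ is the wrong configuration to use.

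The fix is to compare $Z'_s$ with $Z'_{s+1}$. In $Z'_{s+1}$ it is $e$ that overwrites $e'$. Hence $Z'_s$ and $Z'_{s+1}$ agree on all registers and on the states of $p_0$ and $p_2$, and differ only in $p_1$'s state. A $p_2$-solo run from each, which must decide by COF as in your case (c), then decides the same value from a $0$-valent and a $1$-valent configuration. The paper's text for this case says $p_1$ cannot distinguish the two; $p_2$ is the correct witness, as you use.

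Also drop the opening sentences of case (b). $Z'_s$ and $Z_{s+1}$ differ in $p_1$'s state (and possibly memory) too. And when $e$ reads the register that $e'$ writes, $Z'_{s+1}$ is not in general reached by $e$ followed by $e'$. The ``more directly'' argument comparing $Z'_s$ with $Z_s$ is correct and suffices on its own.
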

\begin{proof}
    Suppose, by contradiction, that every configuration reachable from $I$ is univalent.
    %
    Figure~\ref{fig:flp_diag} illustrates the key configurations and critical steps considered below.
    %
    %
    Suppose an execution $\sigma$ where $p_0$ runs solo from $I$. 
    As $p_1$ and $p_2$ take no steps, $p_0$ decides $0$ in $\sigma$ by COF Termination and Validity.
    %
    Let $D_0$ be the configuration reached in $\sigma$ where $p_0$ decides $0$.
    By Agreement (and linearizability), $D_0$ is $0$-valent.
    %
    
    By our assumption, all configurations reachable from $I$ are univalent (in particular, all configurations in $\sigma$, from $I$ to $D_0$).
    By Lemma~\ref{lemma:init_config_1valent}, $I$ is $1$-valent.
    Hence, there exist configurations $C$, $C_0$ and a \emph{critical} step $e=(p_0,m,a)$ such that $I \xrsquigarrow{\ } C \xrightarrow{e} C_0$, where $C$ is $1$-valent and $C_0$ is $0$-valent.
    Now consider an execution $\pi$ where $p_1$ runs solo from $C$ until it decides.
    By Termination and the fact that $C$ is $1$-valent, $\pi$ ends at a configuration $D_1$ where $p_1$ decides $1$.
    %
    
    Let $C=Z_0 \ldots Z_{k \geq 1}=D_1$ be the sequence of configurations obtained by applying the steps in $\pi$.
    Since $p_0$ takes no step in $\pi$, $e$ is enabled in each $Z_{i \in [0,k]}$.
    Let $Z'_i : Z_i \xrightarrow{e}Z'_i$ be the
    resulting configuration from applying step $e$ from $Z_i$.
    %
    %
    Then $Z'_0=C_0$ is $0$-valent and $Z'_k$ is $1$-valent by Agreement.
    %
    Hence, there exists $s \in [0,k\!-\!1]$ such that $Z_s'$ is $0$-valent, and $Z_s$, $Z_{s+1}$ and $Z'_{s+1}$ are $1$-valent.

    Let $e'=(p_1,m',a')$ be the step such that $Z_s \xrightarrow{\smash{e'}} Z_{s+1}$, and $H$ the configuration such that $Z'_s \xrightarrow{\smash{e'}} H$.
    %
    %
    We observe that step $e=(p_0,m,a)$ must be a write. 
    Otherwise, $H$ and $Z_{s+1}$ are indistinguishable to $p_2$, however, $H$ is $0$-valent and $Z_{s+1}$ is $1$-valent.
    Similarly, $e'$ must also be a write; otherwise $Z'_s$ and $Z'_{s+1}$ are indistinguishable to $p_2$, yet $p_2$ would decide differently from $Z'_s$ and $Z'_{s+1}$.
    %
    %
    
    Now there are two cases to consider, both leading to a contradiction.
    %
     If $m\!\neq\!m'$, then $e'$ and $e$ commute, leading to $H=Z'_{s+1}$, and contradicting that $H$ is $0$-valent while $Z'_{s+1}$ is $1$-valent.
    %
    %
    If instead $m\!=\!m'$, then $p_1$ cannot distinguish $Z'_{s+1}$ from $Z'_s$, yet $Z'_s$ is $0$-valent while $Z_{s+1}'$ is $1$-valent; a contradiction.
\end{proof}

%


\begin{lemma} \label{lemma:bivalent_to_bivalent}
    Any bivalent configuration $C$ has an enabled step $e$ by a process in $\{p_1,p_2\}$ such that the resulting configuration $C'$ with $C \xrightarrow{e} C'$ is bivalent.

\end{lemma}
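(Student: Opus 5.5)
We follow the classical FLP/Loui--Abu-Amara bivalency extension argument, restricted to $\{p_1,p_2\}$-only schedules, which is exactly what the valency definition quantifies over. Suppose, towards a contradiction, that $C$ is bivalent but every step of $p_1$ or $p_2$ enabled in $C$ leads to a univalent configuration. Let $e_1$ and $e_2$ denote the next steps of $p_1$ and $p_2$ in $C$. Since $C$ is bivalent, there are $\{p_1,p_2\}$-only sequences from $C$ reaching a $0$-deciding and a $1$-deciding configuration. Neither sequence is empty, because a deciding $C$ would be univalent by Agreement. Each such sequence starts with $e_1$ or $e_2$. Hence the successors $C\xrightarrow{e_1}C_1$ and $C\xrightarrow{e_2}C_2$ cannot both be $v$-valent for the same $v$. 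Indeed, every $\{p_1,p_2\}$-only descendant of $C$ other than $C$ itself is a descendant of $C_1$ or of $C_2$. So, without loss of generality, $C_1$ is $0$-valent and $C_2$ is $1$-valent.

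We then do the usual case analysis on $e_1=(p_1,m_1,a_1)$ and $e_2=(p_2,m_2,a_2)$. The tool in each case is indistinguishability to a single process, combined with COF termination. If $D$ and $D'$ differ only in the local state of one of $p_1,p_2$ (say $p_2$) and in registers that will be overwritten, let the other process (say $p_1$) run solo from both. COF Termination applies, since $p_1$ runs alone, so $p_1$ eventually decides. It must decide the same value from $D$ and $D'$, contradicting distinct valencies. In particular, the same solo run of $p_1$ is a $\{p_1,p_2\}$-only sequence from both configurations.

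(i) Suppose $e_1$ is a read. Then $C_1$ and $C$ differ only in $p_1$'s local state. Applying $e_2$ to both gives $C_1\xrightarrow{e_2}C_{12}$, which is $0$-valent since it is a $\{p_1,p_2\}$-only descendant of $C_1$, and $C_2$, which is $1$-valent. These two configurations are indistinguishable to $p_2$, and memory is identical. Running $p_2$ solo from both yields the same decision, a contradiction. The case where $e_2$ is a read is symmetric.

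(ii) Suppose both are writes to different registers. Then $e_1e_2$ and $e_2e_1$ lead to the same configuration, which would have to be both $0$- and $1$-valent. It cannot be, since $p_1$ running solo from it must decide something.

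(iii) Suppose both are writes to the same register $m$. Then $C_1\xrightarrow{e_2}C_{12}$ and $C_2$ are indistinguishable to $p_2$: the register holds $p_2$'s value in both, and $p_2$'s state is the same. So $p_2$ solo decides the same value from both, a contradiction.

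We expect the main obstacle to be ensuring that termination can legitimately be invoked in each indistinguishability argument. COF only guarantees termination when no conflicting operation takes steps, and $p_0$, whose input $0$ conflicts with $1$, may already have taken steps and may be pending in $C$. Our schedules sidestep this: all extensions are $\{p_1,p_2\}$-only, and $p_1,p_2$ both propose $1$. The solo suffix therefore contains no steps of a conflicting operation, and in the infinite solo continuation $p_0$ is simply crashed, i.e.\ faulty. The running process is then eventually conflict-step-contention free and must decide. With this observation each case closes exactly as in the classical proof, which yields the lemma.
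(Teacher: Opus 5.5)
Your proposal is correct and follows essentially the same FLP-style case analysis as the paper: steps that commute (reads, or accesses to different registers), a read that is invisible to the other process, and a write that is overwritten. The only difference is which of $p_1,p_2$ you run solo in each case. Your explicit argument that COF termination applies to the solo $\{p_1,p_2\}$-only continuations, even though $p_0$ may be pending, fills in a point the paper leaves implicit.
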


\begin{proof}
Let $e_1$ and $e_2$ be the steps by, resp., $p_1$ and $p_2$ that are enabled in $C$.
Let $C_1$ and $C_2$ be the configurations obtained by applying, resp., $e_1$ and $e_2$ to $C$.
By contradiction, assume that that both $C_1$ and $C_2$ are univalent.
%
%
%
Moreover, $C_1$ and $C_2$ must have different valencies, otherwise $C$ would not be bivalent.
Without loss of generality, assume that $C_1$ is $0$-valent and $C_2$ is $1$-valent.
Let $C_1'$ and $C_2'$ be the resulting configurations of applying steps $e_2$ and $e_1$ from $C_1$ and $C_2$ respectively ($C_1 \xrightarrow{e_2} C_1'$ and $C_2 \xrightarrow{e_1} C_2'$).
If both $e_1$ and $e_2$ are read steps, configurations $C_1'$ and $C_2'$ are indistinguishable to any process and have opposite valencies---a contradiction since $p_1$ has to decide differently in any $\{p_1\}$-only execution right after $C_1'$ and $C_2'$.
The same argument applies if $e_1$ and $e_2$ access different registers.

Thus, the two steps access the same register and (at least) one of them is a write. 
Without loss of generality, suppose that $e_1$ is a write step. 
%
%
%
By termination, if we schedule $\{p_1\}$-only executions right after $C_1$ and $C_2'$, it should eventually decide.
Let $D_1$ and $D_2$ be the corresponding deciding configurations.
Process $p_1$ cannot distinguish $C_1$ from $C_2'$: if $e_2$ is a write step, it writes to the same location as step $e_1$ and occurs before it ($C\!\xrightarrow{e_2}\!C_2\!\xrightarrow{e_1}\!C_2'$), and if $e_2$ is a read step, it affects only the local state of  $p_2$.
Hence, $p_1$ cannot distinguish between $D_1$ and $D_2$, and thus should decide the same value in both configurations---a contradiction with the fact that $C_1$ (and, thus, $D_1$) is $0$-valent and $C_2$ (and, thus, $D_2$) is $1$-valent. 


%
%
\ignore{
and step $e$ enabled in $C$ such that for any sequence $\pi$ not containing $e$, we have that $C \xrsquigarrow{\pi} C' \xrightarrow{e} C''$ and $C''$ is univalent.
%
%
Since $C$ is bivalent, there exist sequences $\tau_0,\tau_1$ such that  $C \xrsquigarrow{\tau_0} D_0$, $C \xrsquigarrow{\tau_1} D_1$, $D_0$ is $0$-valent and $D_1$ is $1$-valent. 
Let $\pi_i$ be the longest prefix of $\tau_i$ not containing step $e$.
We have that $C \xrsquigarrow{\pi_0} C_0 \xrightarrow{e} C_0'$ and $C \xrsquigarrow{\pi_1} C_1 \xrightarrow{e} C_1'$, where $C_1'$ is $1$-valent and $C_0'$ is $0$-valent.
Let $G$ be the longest common prefix between $\pi_0$ and $\pi_1$. 
Without loss of generality, assume that the configuration $G'$ such that $G \xrightarrow{e} G'$ is $1$-valent.
Let $G=Y_1,\ldots,Y_{k\geq 0}=C_0$ be the sequence of configurations obtained by applying successively the steps in $\pi_0$.
Define $Z_i$ by $Y_i \xrightarrow{e} Z_i$.
%
Since $Z_1$ is $1$-valent and $Z_k$ is $0$-valent, there exists $s \in [1,k\!-\!1]$ and a step $e'$ such that $Y\xrightarrow{\smash{e'}} Y'$, $Z_s$ is $1$-valent and $Z_{s+1}$ is $0$-valent.
%

Recall that $e\!=\!(p,m,a)$ and $e'\!=\!(p',m',a')$ where $p,p' \in \{p_1,p_2\}$ and $p \neq p'$. Let $H$ be the resulting configuration of taking step $e'$ from $Z$: $Z \xrightarrow{\smash{e'}} H$. Note that $H$ is $1$-valent.
Consider now the following cases: If $m\neq m'$ we have that $e$ and $e'$ commute. Then $H = Z'$, which is a contradiction as $H$ is $1$-valent and $Z'$ is $0$-valent.
If $m=m'$, we have that $e$ is a read and $e'$ is a write, otherwise $p$ cannot distinguish between configurations $Z$ and $Z'$, yet $p$ has to decide $1$ from $Z$ and $0$ from $Z'$.
However, if $e$ is a read and $e'$ is a write operation, then $p'$ cannot distinguish between configurations $Z'$ and $H$, yet $p'$ has to decide $1$ from $H$ and $0$ from $Z'$.
%
%
This contradiction completes the proof.
}
\end{proof}

\begin{theorem} \label{th:no_quasi_cons}
    No algorithm implements conflict-obstruction-free consensus in the read-write shared memory model of three or more processes.
\end{theorem}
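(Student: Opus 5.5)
The plan is to argue by contradiction and reduce to the three-process instance. Suppose some algorithm $\mathcal{A}$ implements COF consensus for $n \geq 3$ processes. First restrict attention to executions in which only $p_0,p_1,p_2$ take steps, with input vector $I=(0,1,1)$, and all other processes are crashed initially. Such processes have no pending operation, so they count as correct. COF termination, validity and agreement are therefore still guaranteed in these executions, and Lemmas~\ref{lemma:init_config_1valent}, \ref{lemma:init_tobivalent} and~\ref{lemma:bivalent_to_bivalent} apply verbatim.

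Next, build an infinite execution by the FLP recipe. By Lemma~\ref{lemma:init_tobivalent}, fix a bivalent configuration $B_0$ reachable from $I$. Then, inductively, given a bivalent $B_j$, apply Lemma~\ref{lemma:bivalent_to_bivalent} to get an enabled step $e_j$ of $p_1$ or $p_2$ with $B_j \xrightarrow{e_j} B_{j+1}$, where $B_{j+1}$ is again bivalent. Concatenating the prefix leading to $B_0$ with $e_0e_1e_2\cdots$ yields an infinite execution $\alpha$. After reaching $B_0$, $\alpha$ contains only steps of $p_1$ and $p_2$, and every configuration along it is bivalent. In particular, no process in $\{p_1,p_2\}$ ever decides in $\alpha$: a configuration in which some process has decided $v$ cannot be bivalent, since the empty $\{p_1,p_2\}$-only sequence already reaches a $v$-deciding configuration, and agreement forbids reaching a $(1-v)$-deciding one.

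The main obstacle is obtaining a genuine violation of COF termination, because Lemma~\ref{lemma:bivalent_to_bivalent} only promises a step by \emph{some} process in $\{p_1,p_2\}$. If $\alpha$ from some point contains steps of only $p_1$, then $p_2$ is faulty, but $p_1$ is correct with a pending \textit{propose}$(1)$. Since $p_0$ stops taking steps after $B_0$, $p_1$ is eventually conflict-step-contention free, so it must decide, contradicting bivalence. If both $p_1$ and $p_2$ take infinitely many steps, both are correct and both propose $1$, so again neither faces a conflicting operation after $B_0$ and both must decide. Either way some correct process with an eventually conflict-step-contention-free operation never completes, contradicting COF.

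I would double-check one point: that $p_0$'s pending \textit{propose}$(0)$ is harmless. It is: $p_0$ is faulty in $\alpha$, and COF only requires that no conflicting instance \emph{take steps} in a suffix, not that it be absent. This is exactly the ``pending conflict'' phenomenon named in the title.
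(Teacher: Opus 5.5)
Your proposal is correct and follows the paper's proof: you obtain a bivalent configuration from Lemma~\ref{lemma:init_tobivalent}, iterate Lemma~\ref{lemma:bivalent_to_bivalent} to build an infinite execution whose suffix is $\{p_1,p_2\}$-only and passes only through bivalent (hence non-deciding) configurations, and derive a contradiction with COF termination. Your reduction from $n\geq 3$ to three processes and your case split on whether one of $p_1,p_2$ takes only finitely many steps make explicit points the paper's proof glosses over; the paper simply asserts that both $p_1$ and $p_2$ must complete.
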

\begin{proof}
    By contradiction, suppose that there exists an algorithm $\mathcal{A}$ implementing COF consensus for three processes $\{p_0,p_1,p_2\}$. 
    %
    %
    Let us consider the initial configuration $I$ with the input vector $(0,1,1)$, i.e., $p_0$ proposes $0$ and $p_1$ and $p_2$ propose $1$.  
    By Lemma~\ref{lemma:init_tobivalent}, $\mathcal{A}$ has a bivalent configuration $C$ reachable from $I$.
    Lemma~\ref{lemma:bivalent_to_bivalent} shows that from any bivalent configuration we can take a step by a process in $\{p_1,p_2\}$ to reach another bivalent configuration. 
    By repeating this argument, we construct an infinite execution starting from $I$ in which, from some point on, only $\{p_1,p_2\}$ take steps.
    As the execution goes through bivalent configurations only, no process can decide in it.
    However, because $\{p_1,p_2\}$ have the same input, this execution is conflict-step-contention free, and both $p_1$ and $p_2$ are required to complete their operations---a contradiction. 
    Hence no such algorithm exists.
\end{proof}

A COF universal construction is an algorithm that takes as input a sequential object specification and produces a linearizable shared-memory implementation satisfying COF termination --- in particular it implements COF consensus.

\begin{corollary} \label{cor:no_cof_univ}
    No universal construction algorithm is conflict-obstruction-free in the read-write shared memory model of three or more processes.
\end{corollary}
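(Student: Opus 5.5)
The plan is to argue by contradiction through a direct reduction to Theorem~\ref{th:no_quasi_cons}. Suppose some universal construction $\mathcal{U}$ is conflict-obstruction-free in the read-write model with $n \geq 3$ processes. Feed $\mathcal{U}$ the binary consensus object of Section~\ref{sec:quasi_cons}: states $\{0,1,\bot\}$, initial state $\bot$, operations $\textit{propose}(0)$ and $\textit{propose}(1)$. By definition, $\mathcal{U}$ then yields a linearizable read-write implementation $\mathcal{I}$ of this object that satisfies COF termination with respect to the conflict relation of the object. We computed that relation earlier: it relates only $\textit{propose}(0)$ and $\textit{propose}(1)$. I will show that $\mathcal{I}$, with each process invoking $\textit{propose}$ once on its input and deciding the returned value, is a COF consensus algorithm. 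That contradicts Theorem~\ref{th:no_quasi_cons}.

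Validity and Agreement follow from linearizability of $\mathcal{I}$. Any execution has a linearization, which is a legal sequential history of the object. In it, the first $\textit{propose}(v)$ sets the state to $v$, and every operation returns $v$, the value of the first linearized proposal. Hence all responses coincide, which gives Agreement. The common value $v$ was proposed by a process, which gives Validity. Pending operations may or may not be included in the linearization, but any returned value is still the argument of some invoked proposal, so this does not break the argument.

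For termination, COF of $\mathcal{I}$ means the following. Take an infinite execution in which, from some point on, a correct process $p$ with input $v$ takes steps concurrently only with operations $\textit{propose}(v)$, i.e., operations that do not conflict with its own. Then $p$'s operation completes. This is exactly the COF termination property used in the proof of Theorem~\ref{th:no_quasi_cons}.

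The only subtle point, which I expect to be the main obstacle, is checking that the conflict relation seen by $\mathcal{U}$ is the one induced by the specification, and not something coarser. Definition~1 refers to the object's $\asymp$, and a universal construction is required to handle every sequential specification, including this one. So the termination guarantee transfers directly. For $n>3$, the remaining processes simply never invoke operations. They are then correct by definition and are irrelevant, so the three-process argument of Theorem~\ref{th:no_quasi_cons} applies verbatim.
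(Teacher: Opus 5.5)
Your proposal is correct and follows the paper's own route. You instantiate the universal construction with the binary consensus object and observe that its conflict relation relates only $\textit{propose}(0)$ and $\textit{propose}(1)$, so COF termination of the construction is exactly COF consensus termination; invoking Theorem~\ref{th:no_quasi_cons} then gives the contradiction. Your extra details (deriving Validity and Agreement from linearizability, and leaving extra processes idle when $n>3$) make explicit what the paper leaves implicit.
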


\bibliographystyle{elsarticle-num}
\bibliography{references}

\end{document}



An infinite execution is \emph{eventually conflict-free} if there is a time after which no two concurrent operations $o_1$ and $o_2$ invoked by correct processes are conflicting.

An infinite execution is \emph{eventually super conflict-free} if there is a time after which no two concurrent operations $o_1$ and $o_2$ invoked by correct processes are conflicting. 
An implementation ensures \emph{super conflict-freedom} if \emph{every} correct process completes each of its operations in an eventually conflict-free execution. 

Put differently, a super conflict-free implementation ensures that every process that executes sufficiently  many steps \emph{without conflicts} (not interleaving with steps of concurrent conflicting operations) should eventually complete its operation. 

\subsubsection{Definitions}

Static Isolating Consensus captures the minimal requirements a conflict-obstruction-free universal construction must satisfy in a simple setting.
Consider three processes $\{p_1, p_2, p_3\}$, each with a binary input: $p_1$ proposes $0$, and both $p_2$ and $p_3$ propose $1$.
A Static Isolating Consensus algorithm implements a single-shot propose operation and satisfies the following properties:
\begin{itemize}
    \item Validity: If a process decides value $v$, then some process proposed $v$;
    \item Agreement: No two correct processes decide differently;
\end{itemize}

Briefly, Static Isolating Consensus is a relaxation of Static Consensus~\cite{static_cons}, where inputs are fixed in advance but termination is only guaranteed under restricted schedules: correct processes must decide as long as, eventually, only processes with the same input (either $p_1$ alone, or $p_2$ and $p_3$ together) take steps, without interference from a process with a different input.

We now show that Static Isolating Consensus is impossible to solve in read-write model under conflict-obstruction-freedom.
Our proof adapts the classical FLP-style bivalency argument~\cite{flp}, introducing a refined notion of valency that accounts for the weaker termination guarantee.

    Then by Lemma~\ref{lemma:init_tobivalent} we have that there exists a configuration $C$ reachable from $I$ that is bivalent. 
    Finally, by Lemma~\ref{lemma:bivalent_to_bivalent} there is a step $e$ realizable in $C$ such that $C \xrsquigarrow{\ } C' \xrightarrow{e} C''$ and $C''$ is bivalent and only $\{p_1,p_2\}$ take steps.
    Therefore, we can indefinitely apply Lemma~\ref{lemma:bivalent_to_bivalent} to construct an infinite execution where only $\{p_1,p_2\}$ take steps by extending bivalent configurations.
    A contradiction, as the algorithm has to terminate when only $p_1$ and $p_2$ are taking steps.

Our proof proceeds via a reduction to a restricted form of consensus, where processes must decide if no other process with a different input takes steps for a sufficiently long time, a task we prove unimplementable.
This result exposes a fundamental limitation of conflict-aware universal constructions: the mere invocation of conflicting operations imposes a synchronization cost, even if conflicting processes remain inactive. 
Progress requires eventual resolution of conflicts.

    By contradiction, suppose such an algorithm exists.
    Then we can use it to implement any deterministic sequential object with a conflict relation, including the static consensus object defined in Section~\ref{sec:quasi_cons}.

    This object supports a single operation $\textit{propose}(v)$ for $v \in \{0,1\}$, and decides on a common value proposed by some process, satisfying agreement and validity.
    Conflicting contention occurs when two different processes concurrently invoke two proposals with different values.

    Now consider a system of three processes $\{p_0, p_1, p_2\}$ where $p_0$ invokes $\textit{propose}(0)$ and $p_1$ and $p_2$ invoke $\textit{propose}(1)$.
    By the conflict-obstruction-free guarantee, the universal construction must ensure termination in any execution where, eventually, only processes with commuting operations take steps.
    In particular:
    \begin{itemize}
        \item If $p_1$ runs solo, it must decide eventually.
        \item If $p_2$ and $p_3$ run concurrently (after $p_1$ becomes quiescent), they must also eventually decide.
    \end{itemize}
    But this exactly matches the termination requirement of conflict-obstruction-free static consensus.
    Therefore, this universal construction would yield an implementation of static consensus that satisfies conflict-obstruction-free termination.
    This contradicts Theorem~\ref{th:no_quasi_cons}, which states that static consensus cannot be implemented in the read-write model under conflict-obstruction-freedom.
    
    Hence, no such universal construction can exist.
    
    We want to show that we can use this algorithm to solve quasi-consensus.
    Consider two operations $propose(0)$ and $propose(1)$ such that $propose(0) \asymp propose(1)$.
    Now consider a set of runs where processes $p_0$ invokes $propose(0)$ and processes $p_1$ and $p_2$ invoke $propose(1)$.
    Observe that quasi-consensus only requires termination when either $p_0$ is running solo or  $p_1$ and $p_2$. 
    Because the algorithm is super conflict free, it is ensured to terminate in both cases: if $p_0$ run solo it should eventually complete its operation. And if $p_1$ and $p_2$ are running concurrently they should also both eventually complete its operation as the conflict relationship is irreflexive. 
    
    However, by Theorem~\ref{th:no_quasi_cons} there is no algorithm for quasi-consensus in the wait-free read write model, which yields a contradiction.